\documentclass[letterpaper,journal]{IEEEtran}
\usepackage{amsmath,amsfonts,amssymb}
\usepackage{algorithmic,algorithm,array,booktabs,graphicx,tabularx}
\usepackage[caption=false,font=footnotesize]{subfig}
\usepackage{textcomp,stfloats,url,cite}
\usepackage{tikz}
\usetikzlibrary{arrows.meta,positioning,calc}
\usepackage{balance}
\usepackage[hidelinks]{hyperref}
\hypersetup{pdftitle={Convex Inner Approximations for Safe AC Dispatch With Device Dynamics},pdfauthor={Ziqing Zhu},pdfsubject={IEEE journal manuscript},pdfkeywords={AC power flow, convex inner approximation, device dynamics, safety-critical control}}
\newtheorem{lemma}{Lemma}
\newtheorem{proposition}{Proposition}
\newtheorem{corollary}{Corollary}
\newcommand{\diag}{\operatorname{diag}}
\newcommand{\col}{\operatorname{col}}
\newcommand{\R}{\mathbb R}
\newcommand{\one}{\mathbf 1}
\newcommand{\Cac}{\mathcal C_{\mathrm{AC}}}
\newcommand{\Cex}{\mathcal C_{\mathrm{exec}}}
\newcommand{\Uad}{\mathcal U}
\newcommand{\norm}[1]{\left\lVert#1\right\rVert}
\newcommand{\LargeMedian}{1.82}
\newcommand{\LargePninetyfive}{3.84}
\newcommand{\LargeMax}{5.57}
\newcommand{\LargeMemory}{202.7}
\newcommand{\LargePercent}{9.3}
\newcommand{\LargeTrajectorySentence}{All 120 solutions satisfy the lifted-reference and AC checks; the independent instantaneous balancing-rate test admits 60. Four 300-/900-s held-reference executions, sampled every 2 s, satisfy AC, device-rate, balancing-input, and terminal-accuracy checks.}

\begin{document}
\title{Convex Inner Approximations for Safe AC Dispatch With Device Dynamics}
\author{Ziqing Zhu,~\IEEEmembership{Member,~IEEE}
\thanks{Corresponding author: Ziqing Zhu (e-mail: ziqzhu@polyu.edu.hk).}}
\markboth{Manuscript}{Zhu: Convex Inner Approximations for Safe AC Dispatch With Device Dynamics}
\maketitle
\begin{abstract}
Safe AC dispatch requires both feasible operating points and enough ramping capability to reach the next target. This paper constructs convex inner approximations that connect nonlinear network feasibility, constrained device response, and dispatch-interval tracking. A fixed-point certificate enforces AC operating limits under bounded load uncertainty. Componentwise response envelopes convert heterogeneous device dynamics into affine conditions on measured states and power references. A command governor then combines these conditions with available ramping capability and a hard terminal-error bound. It selects an admissible target through a minimum-deviation objective and holds the corresponding reference while reserved fast regulation remains active. The resulting quadratic program optimizes controllable references and auxiliary variables. Tests on the IEEE 39- and 118-bus systems include 96 single-target executions and 64 transitions in continuous dispatch sequences at five- and fifteen-minute intervals. All 160 transitions satisfy the AC constraints, independent balancing-generator capability checks, and the prescribed regulation-adjusted terminal accuracy. Complete admission-cycle medians are 3.45 and 8.79 ms. A connected synthetic system with 3,200 buses and 1,000 generating units extends the computational assessment to 999 independent references, with a median cycle time of \LargeMedian{} s. The construction provides a tractable link between available ramping capability and safe, timely execution of AC dispatch.
\end{abstract}
\begin{IEEEkeywords}
AC power flow, command governor, convex inner approximation, device dynamics, ramping capability, sampled-data control.
\end{IEEEkeywords}

\section{Introduction}
\IEEEPARstart{P}{ower} dispatch determines operating targets, whereas the network experiences the injections produced as physical devices respond. Different response speeds and finite ramp rates make the transition between targets an essential part of dispatch feasibility. Two AC-feasible endpoints can be connected by an unsafe physical response. Equally, a safe response can approach the target too slowly for the next dispatch update. The dispatch interface must therefore relate the new target to the actual current state, the available ramping capability, and the time allowed for execution.

AC network constraints make this relationship difficult to compute. Voltages, reactive outputs, and branch apparent powers are coupled through nonlinear equations. Linear approximations and convex relaxations supply valuable computational tools \cite{molzahn,low,matpower}, while a safe operating range requires sufficient conditions for the original constraints. Reusable inner approximations offer such a description. Incorporating device response into that description supports dispatch admission and execution through compact reference optimization.

\subsection{Related Work and Methodological Position}
Fixed-point analysis establishes power-flow solvability and local uniqueness \cite{bolognani}. Convex restriction develops this principle into sufficient conditions for operating-limit satisfaction \cite{lee2019}, and robust convex restriction incorporates bounded injection uncertainty \cite{leerobust}. Robust and chance-constrained AC optimization address related uncertainty questions through admissible policies or probability distributions \cite{louca,roald,lorca}. Sequential convex restriction also identifies feasible operating paths \cite{leepath}. The present construction uses a fixed local AC certificate and adds the response conditions needed to execute a dispatch transition.

Ramping-capability formulations explicitly relate upward and downward movements to the available generation range and ramp rates \cite{navid,caisofrp}. This inter-interval requirement is distinct from regulation within a dispatch interval. Operating-reserve analysis assigns the latter to correction of short-term deviations around the scheduled response \cite{elareserve}. These established distinctions motivate a dispatch-time assessment of the newly requested movement, with a fixed allowance for faster regulation during execution.

Reference and command governors provide a mature framework for selecting admissible commands from a desired setpoint and the current state \cite{gilbert,bemporad,garone}. Predictive safety filters extend this approach to constrained nonlinear systems \cite{wabersich}. A command governor can optimize the reference vector directly while preserving a pre-existing device controller. For AC dispatch, the key is to express network safety and deadline-dependent device capability as convex reference conditions. This paper achieves that connection through a first-order response enclosure and a finite-time error bound.

Control barrier functions (CBFs) yield affine input conditions for preserving state constraints \cite{ames2017,ames2019}; higher-order constructions cover higher relative degrees \cite{xiao}. Sampled-data and input-delay formulations account for the evolution between control updates \cite{breeden,singletary,taylor}. Here, a performance-equivalent power-response model \cite{zhou} gives network-margin outputs of relative degree one. An explicit held-response enclosure covers the complete intersample evolution, alongside a first-order CBF condition. This combination supplies reusable admissibility conditions for the dispatch-time command governor.

\subsection{Contributions}
First, the paper constructs a convex admissible state set by combining robust AC certification with heterogeneous device-response bounds. Network and device constraints become affine inequalities in the measured state, reference, and auxiliary variables.

Second, it incorporates available ramping capability and terminal tracking accuracy into an admissible target set. A command governor selects the target at each dispatch update using the actual measured state. A hard endpoint condition guarantees the scheduled response reaches that target within the dispatch interval, while the held-response certificate covers continuous network safety and reserved regulation.

Third, it evaluates the construction on the IEEE 39- and 118-bus systems using paired solver-generated targets, five- and fifteen-minute dispatch intervals, continuous target reversals, and a terminal-condition ablation. Independent AC and balancing-capability checks connect the convex conditions to actual power trajectories. The experiments quantify target correction, tracking, and complete computational cost; a connected 3,200-bus synthetic benchmark additionally tests 999 independent references.

\section{AC Network and Device Response Models}
\subsection{AC Network Constraints}
Consider a fixed network topology and generator commitment. Let $V_i$ denote the complex voltage at bus $i$, $Y$ the bus-admittance matrix, and $P_i^{\rm G},Q_i^{\rm G},P_i^{\rm L},Q_i^{\rm L}$ the generation and load powers. The AC equations are
\begin{equation}
 P_i^{\rm G}-P_i^{\rm L}+\mathrm j(Q_i^{\rm G}-Q_i^{\rm L})
 =V_i\overline{(YV)_i} \label{eq:ac}
\end{equation}
The constraints include bus-voltage bounds, generator active- and reactive-power bounds, apparent-power limits at both ends of every branch, and branch angle-difference limits. The reference-bus voltage and PV-bus voltage magnitudes are fixed within an operating mode. The balancing generator has finite active- and reactive-power limits.

Let $p^0\in\R^{n_c}$ collect the active-power outputs of the independently controlled generators at a feasible base point. Define $u=p-p^0$, where $p$ is their actual active-power injection vector. Dependent network states $x\in\R^{n_x}$ contain the real and imaginary voltages at non-reference buses. Bounded factors $w$ describe load uncertainty. A fixed participation matrix maps these factors to buswise multiplicative errors shared by active and reactive demand. The identity mapping gives independent bus errors. Thus
\begin{equation}
 F(x,u,w)=0,\quad g(x,u,w)\le0,\quad |w|\le\bar w \label{eq:network}
\end{equation}
All vector inequalities and absolute values are componentwise. Powers are expressed in per unit unless a physical unit is stated. The objective is a sufficient description of injections for which \eqref{eq:network} is satisfied for every allowed $w$.

The network is represented by quasi-steady AC equations coupled to active-power response dynamics. Reactive outputs and balancing power are obtained from those equations. Their operating limits enter $g$; the balancing generator's response capability is additionally evaluated from the required power trajectory in Section IV. The resulting model couples quasi-steady network security to the explicitly represented active-power dynamics.

\subsection{Device Response and Reference Updates}
Let $\hat p^{\rm dis}$ denote a requested dispatch target, $p^{\rm tar}$ its accepted target, and $r$ the deviation of the device power reference from $p^0$. These are distinct from the actual injection $p=p^0+u$. Between reference updates, each independently controlled device obeys
\begin{equation}
 \begin{gathered}
 \dot u_i(t)=\frac{r_i+\delta_i(t)-u_i(t)}{\tau_i(t)},\\
 \underline\tau_i\le\tau_i(t)\le\overline\tau_i,\qquad |\delta_i(t)|\le a_i
 \end{gathered} \label{eq:device}
\end{equation}
Decompose $\delta=\delta^{\rm reg}+\delta^{\rm mod}$, with $|\delta^{\rm reg}|\le a^{\rm reg}$ and $|\delta^{\rm mod}|\le a^{\rm mod}$, so $a=a^{\rm reg}+a^{\rm mod}$. The two terms represent reserved fast regulation and input-equivalent modeling error. The measured deviation $\hat u$ satisfies $|u(0)-\hat u|\le\eta$. Power-reference and rate limits use $\underline u=p^{\min}-p^0$, $\overline u=p^{\max}-p^0$, and $|\dot u_i|\le R_i$, with $p^{\max}=\col_i\{P_i^{\max}\}$.

Dispatch updates occur at $T_k$ with $T_{\rm d}=T_{k+1}-T_k$. The tests use 300 and 900 s, matching the five- and fifteen-minute market intervals in CAISO's ramping-product deployment \cite{caiso2020}. The governor selects one scheduled reference for that interval; fast regulation changes every $T_{\rm f}$ seconds. CAISO distinguishes inter-interval ramping from post-dispatch deviations handled by regulation \cite[pp.~5--7]{caisofrp}. NREL likewise describes units holding or approaching their next economic setpoint while regulation corrects the remaining imbalance \cite[p.~14]{elareserve}. Accordingly, substantial changes in the scheduled operating point enter at dispatch updates. Modeled intra-interval variation remains within $a^{\rm reg}$ and $a^{\rm mod}$. Internal control steps track the accepted target and enforce the allocated regulation bounds; subsequent ramping requirements are reassessed at the next dispatch update. Network topology, commitment, and generator voltage setpoints remain fixed within each certified operating interval.

\subsection{Admissible States and References}
The construction begins with a robust AC inner approximation $\Cac$ in $u$ coordinates. For a measured state $\hat u$, an admissible reference must keep every response of \eqref{eq:device}, consistent with $\eta$, $a$, and the time-constant bounds, inside $\Cac$, while respecting the device limits. We seek a sufficient admissible reference set $\Uad(\hat u)$ and define
\begin{equation}
 \Cex=\{\hat u:\Uad(\hat u)\ne\varnothing\} \label{eq:admissible_intro}
\end{equation}
This admissible \emph{state} set is defined by the hard feasibility conditions; the requested target determines the preferred admissible reference. The deadline-dependent target set developed below further requires a prescribed terminal accuracy.

\section{Construction of a Convex Inner Approximation With Device Dynamics}
\subsection{Fixed-Point Certification of AC Feasibility}
\subsubsection{Exact quadratic representation}
Choose a feasible base voltage $x^0$ and let $v=x-x^0$. In rectangular coordinates, the primitive network quantities are exact quadratics:
\begin{equation}
 y_\alpha(x^0+v)=y_\alpha^0+G_\alpha v+v^\top Q_\alpha v \label{eq:quadratic}
\end{equation}
The primitives include nodal injections, squared voltage magnitudes, branch active and reactive powers, and the real and imaginary parts of cross-voltage products. Consequently, the square power-flow subsystem can be written as
\begin{equation}
 F_0+Jv+q_F(v)-B_u u-B_w w=0 \label{eq:expansion}
\end{equation}
where the entries of $q_F$ are quadratic forms. The subsystem includes active-power balances at non-reference buses, reactive-power balances at PQ buses, and squared-voltage equations at PV buses. Fixed-point convex restriction provides the underlying certification principle \cite{lee2019,leerobust}; the following formulation retains the exact rectangular-coordinate quadratic structure.

\subsubsection{Enclosing the AC solution}
Let $R_J$ be a nonsingular approximate inverse of $J$, retain $E=I-R_JJ$, and compute $S,T$ from the linear systems with right-hand sides $B_u,B_w$. For a prescribed injection box $|u|\le d$, define
\begin{equation}
 \begin{split}
 M&=[S\diag(d)\quad T\diag(\bar w)],\\
 B&=[B_u\diag(d)\quad B_w\diag(\bar w)],\\
 v&=M\xi+e,\qquad |\xi|\le\one,\quad |e|\le\bar e
 \end{split} \label{eq:enclosure}
\end{equation}
The normalized vector $\xi$ represents the actual $u,w$ within their boxes. The remaining voltage error is enclosed through the fixed-point map
\begin{equation}
 \Phi(e;\xi)=Ee-R_JF_0-R_J(JM-B)\xi-R_Jq_F(M\xi+e) \label{eq:map}
\end{equation}
Both inverse and linear-solve residuals are retained. In particular, the inverse is applied to the quadratic equations \emph{before} absolute values are taken, preserving cancellations between the AC equations.

For a symmetric matrix $Q$, introduce the computable envelope
\begin{equation}
 \begin{split}
 \beta(Q)=&\ \one^\top|M^\top QM|\one
 +2\bar e^\top|QM|\one+\bar e^\top|Q|\bar e
 \end{split} \label{eq:beta}
\end{equation}
Then $|(M\xi+e)^\top Q(M\xi+e)|\le\beta(Q)$ throughout \eqref{eq:enclosure}. For large sparse networks, the first term is bounded by $(|M|\one)^\top|QM|\one$ and evaluated rowwise in the sparse quadratic-monomial basis. This preserves sufficiency while avoiding a dense third-order coefficient tensor. Let $\widetilde Q_j=\sum_i(R_J)_{ji}Q_{F,i}$. A sufficient self-mapping condition is
\begin{equation}
 |E|\bar e+|R_JF_0|+|R_J(JM-B)|\one
 +\col_j\{\beta(\widetilde Q_j)\}\le\bar e \label{eq:selfmap}
\end{equation}
Continuity and Brouwer's theorem then give a root of \eqref{eq:expansion} in the enclosure for every allowed $u,w$. The implementation additionally checks a weighted infinity-norm contraction bound for the enclosed root. If the nonnegative Jacobian bound $K$ admits $z>0$ with $Kz<z$, the enclosed root is unique locally. This identifies a consistent AC solution branch as injections vary.

\begin{figure}[t]
\centering
\begin{tikzpicture}[font=\footnotesize,>=Latex,
 box/.style={draw=blue!40!black,fill=blue!3,rounded corners=1.5pt,align=center,text width=7.65cm,minimum height=9mm},
 note/.style={align=center,text=black!70}]
\node[box] (ac) {\textbf{AC inner approximation}\\Fixed-point enclosure and operating limits: $Au\le b$};
\node[box,below=6mm of ac] (resp) {\textbf{Device-response enclosure}\\Heterogeneous time constants, input errors, and ramp limits};
\node[box,below=6mm of resp] (set) {\textbf{Admissible state and reference sets}\\Affine lifted conditions in $(\hat u,r,s)$; projection onto $\hat u$};
\node[box,below=6mm of set] (qp) {\textbf{Dispatch-time command governor}\\Ramping capability and hard terminal accuracy};
\draw[->] (ac)--(resp);\draw[->] (resp)--(set);\draw[->] (set)--(qp);
\end{tikzpicture}
\caption{Construction and use of the inner approximation. The AC certificate and response enclosure define admissible states and references. Ramping and terminal conditions then determine the target that can be executed before the next dispatch update.}
\label{fig:method}
\end{figure}

\subsubsection{Enforcing the original operating limits}
Write the primitive operating inequalities as $Ly(x)+D_w w\le h$. The first-order voltage error is eliminated using \eqref{eq:expansion}, giving
\begin{equation}
 \begin{split}
 y={}&y^0-GR_JF_0+G\widetilde S u+G\widetilde T w+GEe\\
 &+\col_\alpha\{v^\top Q_\alpha^{\rm eff}v\},\\
 \widetilde S={}&S-R_J(JS-B_u),\quad
 \widetilde T=T-R_J(JT-B_w),\\
 Q_\alpha^{\rm eff}={}&Q_\alpha-\sum_jG_{\alpha j}\widetilde Q_j
 \end{split} \label{eq:outputs}
\end{equation}
This identity combines the network equations and output bounds before taking their worst-case magnitudes. With $\beta_y=\col_\alpha\{\beta(Q_\alpha^{\rm eff})\}$, sufficient output conditions are
\begin{equation}
 \begin{split}
 A_o={}&LG\widetilde S,\\
 b_o={}&h-L(y^0-GR_JF_0)-|LG\widetilde T+D_w|\bar w\\
 &-|LGE|\bar e-|L|\beta_y,\qquad A_ou\le b_o
 \end{split} \label{eq:outputbound}
\end{equation}
Append the exact controlled-generator active-power bounds and $|u|\le d$ to obtain
\begin{equation}
 \Cac=\{u:Au\le b\} \label{eq:CAC}
\end{equation}
Rowwise support bounds remove inequalities that are redundant over the complete box $|u|\le d$, reducing the online constraint count while preserving all certified operating limits.

At each branch end, the apparent-power limit is enforced through a 32-facet \emph{inscribed} polygon:
\begin{equation}
 \cos\theta_k P+\sin\theta_k Q
 \le S^{\max}\cos(\pi/32),\qquad \theta_k=2\pi k/32 \label{eq:polygon}
\end{equation}
The cosine factor ensures inclusion inside the original circle. Voltage-magnitude limits use the squared-voltage primitives. Angle limits use the real and imaginary cross-voltage products together with a positive-real-product condition. Reactive-power bounds include the uncertain reactive load at each generator bus. Therefore, \eqref{eq:CAC} certifies operating limits as well as the existence of an AC solution.

\subsection{Device-Response Bounds During Reference Holding}
Heterogeneous devices follow different response curves. For constant time constants and zero disturbance, $u_i(t)=e^{-t/\tau_i}u_i(0)+(1-e^{-t/\tau_i})r_i$. The coefficients differ between devices. Hence the vector $u(t)$ generally does not lie on the line segment joining $u(0)$ and $r$, even if both endpoints belong to $\Cac$.

Let elapsed time after a reference update be partitioned into cells $I_\ell=[t_\ell,t_{\ell+1}]$. All exponential expressions below are componentwise. Define
\begin{equation}
 \begin{split}
 E_\ell^-&=\exp(-t_{\ell+1}/\underline\tau),\quad
 E_\ell^+=\exp(-t_\ell/\overline\tau),\\
 E_\ell^{\rm m}&=(E_\ell^++E_\ell^-)/2,\quad
 E_\ell^{\rm r}=(E_\ell^+-E_\ell^-)/2
 \end{split} \label{eq:exponents}
\end{equation}
For the terminal cell $[t_L,\infty)$, $E_L^-=0$.

\begin{lemma}[Reference-holding enclosure]
For \eqref{eq:device}, a constant reference $r$, and $|u(0)-\hat u|\le\eta$, every response in $I_\ell$ satisfies $|u(t)-c_\ell|\le\rho_\ell$, where
\begin{equation}
 \begin{split}
 c_\ell&=E_\ell^{\rm m}\odot\hat u+(\one-E_\ell^{\rm m})\odot r,\\
 \rho_\ell&=E_\ell^{\rm r}\odot|r-\hat u|
 +E_\ell^+\odot\eta+(\one-E_\ell^-)\odot a
 \end{split} \label{eq:envelope}
\end{equation}
Here $\odot$ denotes componentwise multiplication.
\end{lemma}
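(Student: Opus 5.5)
The plan is to work componentwise. Both \eqref{eq:device} and the claimed bound \eqref{eq:envelope} are decoupled across devices, so it suffices to fix one index $i$ and drop the subscript. I will treat $\tau(t)$ and $\delta(t)$ as measurable functions. They satisfy $\underline\tau\le\tau\le\overline\tau$ and $|\delta|\le a$, so the scalar linear ODE has a unique absolutely continuous (Carathéodory) solution.

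The first step is the variation-of-constants formula. Define the accumulated rate
\[
\Lambda(t)=\int_0^t \tau(s)^{-1}\,ds, \qquad \phi(t)=e^{-\Lambda(t)}.
\]
Then
\[
u(t)=\phi(t)u(0)+\int_0^t e^{-(\Lambda(t)-\Lambda(s))}\tau(s)^{-1}\bigl(r+\delta(s)\bigr)\,ds .
\]
The kernel $k(t,s)=e^{-(\Lambda(t)-\Lambda(s))}\tau(s)^{-1}$ is nonnegative. It integrates over $[0,t]$ to exactly $1-\phi(t)$. This gives
\[
u(t)=\phi(t)u(0)+\bigl(1-\phi(t)\bigr)r+\int_0^t k(t,s)\delta(s)\,ds .
\]

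The second step localizes $\phi$ on the cell. Since $t/\overline\tau\le\Lambda(t)\le t/\underline\tau$ and $t\in I_\ell=[t_\ell,t_{\ell+1}]$, monotonicity gives $\phi(t)\in[E_\ell^-,E_\ell^+]$ from \eqref{eq:exponents}. On the terminal cell the lower bound $E_L^-=0$ holds trivially because $\phi>0$. Hence $|\phi(t)-E_\ell^{\rm m}|\le E_\ell^{\rm r}$.

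The third step decomposes $u(t)-c_\ell$ into three pieces:
\[
u(t)-c_\ell=\bigl(\phi(t)-E_\ell^{\rm m}\bigr)(\hat u-r)+\phi(t)\bigl(u(0)-\hat u\bigr)+\int_0^t k(t,s)\delta(s)\,ds .
\]
This identity follows by inserting $\phi\hat u+(1-\phi)r$ and subtracting $c_\ell=E_\ell^{\rm m}\hat u+(1-E_\ell^{\rm m})r$. The three pieces are bounded as follows.
\begin{itemize}
\item The first is bounded by $E_\ell^{\rm r}|r-\hat u|$.
\item The second is bounded by $E_\ell^+\eta$, using $\phi\le E_\ell^+$ and the measurement bound.
\item The third is bounded by $a\int_0^t k\,ds=a(1-\phi(t))\le(1-E_\ell^-)a$, using $k\ge0$ and $\phi\ge E_\ell^-$.
\end{itemize}
Summing these by the triangle inequality yields $\rho_\ell$. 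Reassembling the components with $\odot$ gives \eqref{eq:envelope}.

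The step I expect to need the most care is the exact kernel identity with a time-varying $\tau(t)$. One must verify that $\int_0^t k(t,s)\,ds=1-e^{-\Lambda(t)}$; this holds because $\partial_s e^{-(\Lambda(t)-\Lambda(s))}=k(t,s)$ almost everywhere. One must also check that the disturbance term inherits the factor $(1-\phi)$ rather than a cruder bound. Once that identity is in place, every bound above is a monotone one-sided estimate. The only remaining subtlety is that $E_\ell^{\pm}$ pair the slowest and fastest time constants with opposite cell endpoints, and this is exactly what the bounds on $\Lambda$ provide.
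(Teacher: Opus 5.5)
Your proposal is correct and follows the paper's proof essentially step for step. Both use the variation-of-constants formula with a nonnegative kernel integrating to $1-E_i(t)$, localize $E_i(t)$ in $[E_\ell^-,E_\ell^+]$, and split $u(t)-c_\ell$ into the same three terms $(E_i-E_\ell^{\rm m})(\hat u_i-r_i)$, $E_i\,(u_i(0)-\hat u_i)$, and the disturbance integral. Your added care about measurable $\tau,\delta$ and the almost-everywhere derivative of the kernel only makes explicit what the paper leaves implicit.
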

\begin{IEEEproof}
Set $k_i(t)=1/\tau_i(t)$ and $E_i(t)=\exp[-\int_0^t k_i(q)dq]$. The scalar variation-of-constants formula gives
\begin{equation}
 \begin{split}
 u_i(t)={}&E_i(t)u_i(0)+(1-E_i(t))r_i\\
 &+\int_0^t e^{-\int_s^t k_i(q)dq}k_i(s)\delta_i(s)ds
 \end{split} \label{eq:variation}
\end{equation}
The kernel is nonnegative and integrates to $1-E_i(t)$. Thus the last term has magnitude at most $(1-E_i(t))a_i$. On $I_\ell$, $E_{\ell,i}^-\le E_i(t)\le E_{\ell,i}^+$. Writing $u_i(0)=\hat u_i+e_i^0$, the deviation from the stated center is the sum of $(E_i-E_{\ell,i}^{\rm m})(\hat u_i-r_i)$, $E_ie_i^0$, and the integral term. Bounding these three terms yields \eqref{eq:envelope}. The argument permits time-varying $\tau_i$ and arbitrary bounded $\delta_i$.
\end{IEEEproof}

Introduce $s\ge|r-\hat u|$. Then the support function of the interval box yields the affine sufficient condition
\begin{equation}
 Ac_\ell+|A|\left[E_\ell^{\rm r}\odot s+
 E_\ell^+\odot\eta+(\one-E_\ell^-)\odot a\right]\le b \label{eq:hold}
\end{equation}
Unlike a pointwise power-flow test, this condition covers all times within a cell. A finite collection of cells plus the terminal cell covers every possible reference-holding duration. The terminal cell also enforces the limiting reference neighborhood, so an accepted reference can remain in use if the next tracking update is unavailable.

\subsection{Convex Construction of the Admissible State Set}
\subsubsection{Hard device and network conditions}
The initial measurement box is required to lie inside $\Cac$:
\begin{equation}
 A\hat u+|A|\eta\le b \label{eq:initial}
\end{equation}
The following affine device constraints bound both the effective input and its resulting rate of change:
\begin{equation}
 \begin{split}
 \underline u+a&\le r\le\overline u-a,\qquad s\ge r-\hat u,\quad s\ge\hat u-r,\\
 s+\eta+a&\le\underline\tau\odot R,\qquad
 2a\le\underline\tau\odot R 
 \end{split} \label{eq:actuator}
\end{equation}
For a held reference, positivity gives $|u_i(t)-r_i|\le\max\{|u_i(0)-r_i|,a_i\}$. Therefore the numerator of \eqref{eq:device} has magnitude at most $\max\{s_i+\eta_i,a_i\}+a_i$, which is bounded by \eqref{eq:actuator}. This bounds the continuous derivative throughout the hold.

We also impose a robust first-order CBF condition at an update. Define $h_j(u)=b_j-A_ju$, $k^{\rm m}=(1/\underline\tau+1/\overline\tau)/2$, and $k^{\rm r}=(1/\underline\tau-1/\overline\tau)/2$. For $\alpha>0$, a sufficient condition for $\dot h_j\ge-\alpha h_j$ over the current measurement box is
\begin{equation}
 \begin{aligned}
 &A\diag(k^{\rm m})(r-\hat u)+|A|\diag(k^{\rm r})s\\
 &\quad+|A|\diag(1/\underline\tau)(\eta+a)\\
 &\qquad\le\alpha(b-A\hat u-|A|\eta)
 \end{aligned} \label{eq:cbf}
\end{equation}
The affine margins have relative degree one with respect to $r$ under \eqref{eq:device}. Thus \eqref{eq:cbf} is the appropriate CBF specialization \cite{ames2017,xiao}. It is used alongside \eqref{eq:hold}: the CBF controls the local direction, while the response enclosure covers the sampled-data evolution \cite{breeden}.

\subsubsection{Lifted polyhedron and its projection}
Collect \eqref{eq:hold}--\eqref{eq:cbf} into a polyhedron
\begin{equation}
 \begin{split}
 \mathcal P&=\{(\hat u,r,s):H_u\hat u+H_rr+H_ss\le q\},\\
 \Uad(\hat u)&=\{r:\exists s,\ (\hat u,r,s)\in\mathcal P\},\\
 \Cex&=\{\hat u:\exists r,s,\ (\hat u,r,s)\in\mathcal P\}
 \end{split} \label{eq:polyhedron}
\end{equation}
The matrices are constant for fixed network, device, and uncertainty bounds. No explicit elimination of $r,s$ is required to test membership: feasibility of the lifted inequalities suffices. Projection preserves convexity \cite{boyd}, making \eqref{eq:polyhedron} a useful state-space description as well as an online computational representation.

\begin{proposition}[Convex admissibility and constraint satisfaction]
Suppose \eqref{eq:selfmap} holds, \eqref{eq:outputbound} and the domain bounds define $\Cac$, and the time cells cover $[0,\infty)$. Then $\Cex$ is a convex polyhedron. For every $\hat u\in\Cex$, any witnessing reference $r\in\Uad(\hat u)$ keeps all responses of \eqref{eq:device} inside $\Cac$ and satisfies the device input and rate limits throughout its hold, for every initial state in the measurement box and every permitted disturbance and time constant. At each resulting injection and allowed load realization, an AC solution in the certified enclosure satisfies the original operating limits.
\end{proposition}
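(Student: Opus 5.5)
The proof splits into three parts: convexity of the state set, containment of every response in $\Cac$ together with the device limits, and the AC interpretation supplied by the fixed-point certificate. I take them in that order.

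\emph{Convexity.} Every constraint collected in \eqref{eq:polyhedron} is affine in $(\hat u,r,s)$ once the data $A,b,\eta,a,\underline\tau,\overline\tau$ and the cell exponentials \eqref{eq:exponents} are fixed.
\begin{itemize}
\item The center $c_\ell$ in \eqref{eq:hold} is linear in $(\hat u,r)$.
\item The terms $|A|$, $E_\ell^{\rm r}\odot s$, and the constant offsets enter linearly.
\item \eqref{eq:initial}, \eqref{eq:actuator}, and \eqref{eq:cbf} are explicitly affine.
\end{itemize}
So $\mathcal P$ is a finite intersection of half-spaces, hence a polyhedron. $\Cex$ is its image under the linear map $(\hat u,r,s)\mapsto\hat u$. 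The projection of a polyhedron is a polyhedron by Fourier--Motzkin elimination, and it is convex.

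\emph{Containment in $\Cac$.} Fix $\hat u\in\Cex$, a witness $(r,s)$, any $u(0)$ with $|u(0)-\hat u|\le\eta$, and any admissible $\tau(t)$ and $\delta(t)$. For any $t\ge0$, pick a cell $I_\ell$ containing $t$; this is possible because the cells, including the terminal cell, cover $[0,\infty)$. The Reference-holding enclosure lemma gives $u(t)=c_\ell+\zeta$ with $|\zeta|\le\rho_\ell$. Since $s\ge|r-\hat u|$ and $E_\ell^{\rm r}\ge0$, we have $\rho_\ell\le E_\ell^{\rm r}\odot s+E_\ell^+\odot\eta+(\one-E_\ell^-)\odot a$. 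Row by row, $A_ju(t)\le A_jc_\ell+|A_j|\,|\zeta|$. Condition \eqref{eq:hold} then gives $Au(t)\le b$, that is, $u(t)\in\Cac$.

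\emph{Device limits.} The input limit follows from $\underline u+a\le r\le\overline u-a$, because the effective input $r+\delta$ stays in $[\underline u,\overline u]$.

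For the rate limit I repeat the argument after \eqref{eq:actuator}.
\begin{itemize}
\item The variation-of-constants formula \eqref{eq:variation} shows $u_i(t)-r_i$ is a convex combination of $u_i(0)-r_i$ and values of $\delta_i$. Hence $|u_i(t)-r_i|\le\max\{s_i+\eta_i,a_i\}$.
\item The numerator of \eqref{eq:device} is therefore at most $\max\{s_i+\eta_i,a_i\}+a_i$ in magnitude.
\item Both branches of the maximum are bounded by $\underline\tau_iR_i$, using $s+\eta+a\le\underline\tau\odot R$ and $2a\le\underline\tau\odot R$.
\item Dividing by $\tau_i\ge\underline\tau_i$ gives $|\dot u_i|\le R_i$ throughout the hold.
\end{itemize}
The CBF condition \eqref{eq:cbf} is not needed for safety itself; I would remark that it only adds constraints.

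\emph{AC interpretation.} This is the delicate step, because $\Cac$ must genuinely certify the nonlinear limits.
\begin{itemize}
\item Since $\Cac$ includes $|u|\le d$, every $u(t)$ lies in the box used to build $M$ in \eqref{eq:enclosure}.
\item For any $|w|\le\bar w$, the normalized $\xi$ satisfies $|\xi|\le\one$.
\item By \eqref{eq:selfmap} and the bound $|(M\xi+e)^\top Q(M\xi+e)|\le\beta(Q)$, the map $\Phi(\cdot;\xi)$ sends the box $|e|\le\bar e$ into itself. It is continuous, so Brouwer gives a fixed point $e$.
\item Rearranging $e=\Phi(e;\xi)$ with $E=I-R_JJ$ gives $R_J(\cdot)=0$ for the left side of \eqref{eq:expansion}. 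Nonsingularity of $R_J$ then yields a root $v=M\xi+e$ of \eqref{eq:expansion}.
\item Substituting into \eqref{eq:outputs} and bounding termwise with $\beta_y$, $|e|\le\bar e$, and $|w|\le\bar w$ shows that $A_ou\le b_o$ from \eqref{eq:outputbound} implies $Ly+D_ww\le h$.
\item The apparent-power limits hold because the polygon \eqref{eq:polygon} is inscribed in the circle. The generator active-power bounds are included exactly in $\Cac$.
\end{itemize}
The main obstacle is checking that \eqref{eq:outputs} is an exact identity at the root, so that the inverse residual terms $E$, $JS-B_u$ and $JT-B_w$ are accounted for without loss. I would verify this by substituting $Jv=-F_0-q_F(v)+B_uu+B_ww$ after writing $Gv=GR_JJv+GEv$.
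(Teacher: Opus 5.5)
Your proposal is correct and follows the paper's proof step for step: affine lifted constraints and projection for convexity, the holding lemma with the support bound on every cell for containment in $\Cac$, \eqref{eq:actuator} for the input and rate limits, and Brouwer plus the identity \eqref{eq:outputs} for the AC limits. It simply fills in more detail than the paper gives. In your planned check of \eqref{eq:outputs}, substituting for $Jv$ leaves $GEv$, which you must further split as $GE(Su+Tw)+GEe$ using $v=M\xi+e$, so that the $u$- and $w$-terms combine into $G\widetilde S u$ and $G\widetilde T w$.
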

\begin{IEEEproof}
For fixed bounds, the centers in \eqref{eq:envelope} are affine in $(\hat u,r)$ and the radii in \eqref{eq:hold} are affine in $s$. All remaining constraints are affine. Hence $\mathcal P$ and its projection $\Cex$ are convex polyhedra. Given a witness, the lemma and the support bound \eqref{eq:hold} imply $Au(t)\le b$ on every cell, including the tail. Equation \eqref{eq:initial} covers the initial measurement box, and \eqref{eq:actuator} gives the input and continuous-rate bounds. Every resulting $u(t)$ remains in the domain for which \eqref{eq:selfmap} supplies an AC solution. Substitution of the envelope bounds into \eqref{eq:outputs} gives \eqref{eq:outputbound}, so that solution satisfies the original operating inequalities. If the contraction check is also satisfied, the enclosed solution is the unique local branch. The guarantee follows from the accepted state--reference pair and remains valid throughout that reference hold.
\end{IEEEproof}

A simple quantitative nonemptiness test fixes $r=0$ and seeks the largest $\rho\ge0$ for which $|\hat u_i|\le\rho P_i^{\max}$ satisfies all lifted conditions with $s_i=\rho P_i^{\max}$. Taking rowwise supports reduces the test to scalar inequalities in $\rho$. This yields an explicit centered box inside $\Cex$. General membership in \eqref{eq:polyhedron} additionally permits other reference choices. The distinction is important when comparing a high-dimensional box with a two-dimensional section.

\subsection{Ramping Capability and Reference Selection}
\subsubsection{Available movement at a dispatch update}
The initial state is $u(T_k^-)$, including the response remaining from the preceding interval. The governor uses its measurement $\hat u_k$, preserving any deviation from the preceding target. Conventional ramping-capability screening combines generation headroom and rate capability over the dispatch interval \cite{navid,caisofrp}. With the reserved input range already imposed in \eqref{eq:actuator}, the robust inter-interval movement condition is
\begin{equation}
 -R T_{\rm d}+\eta\le u^{\rm tar}-\hat u_k
 \le R T_{\rm d}-\eta \label{eq:interramp}
\end{equation}
For example, the available upward movement is the smaller of $R_iT_{\rm d}-\eta_i$ and $\overline u_i-a_i-\hat u_{k,i}$; the downward expression uses $\hat u_{k,i}-\underline u_i-a_i$. The continuous-rate constraints in \eqref{eq:actuator} additionally account for device lag throughout execution. Thus satisfying an aggregate energy or endpoint ramp limit alone is insufficient for admission.

\subsubsection{Terminal accuracy under reserved regulation}
Set the held reference to the accepted target, $r=u^{\rm tar}=p^{\rm tar}-p^0$. Over the current interval, define the regulation response $z_i^{\rm reg}$ by $\dot z_i^{\rm reg}=(\delta_i^{\rm reg}-z_i^{\rm reg})/\tau_i(t)$, $z_i^{\rm reg}(0)=0$. Each interval inherits the actual initial power in $u(0)$. Then $u-z^{\rm reg}$ follows \eqref{eq:device} with only $\delta^{\rm mod}$. For the prescribed tolerance $\varepsilon>0$, impose
\begin{equation}
 \begin{gathered}
 e_i^+=e^{-T_{\rm d}/\overline\tau_i},\qquad
 e_i^-=e^{-T_{\rm d}/\underline\tau_i},\\
 e_i^+(s_i+\eta_i)+(1-e_i^-)a_i^{\rm mod}
 \le\varepsilon_i \label{eq:terminal}
 \end{gathered}
\end{equation}
The regulation response is deliberately separated from tracking: a nonzero regulation request changes actual power even when the scheduled target has been attained. Both components remain covered by the network and continuous-rate conditions through the full allowance $a$.

\begin{corollary}[Safe dispatch-interval target attainment]
Let $r=u^{\rm tar}$ satisfy \eqref{eq:polyhedron}, \eqref{eq:interramp}, and \eqref{eq:terminal}. Under the bounds of Proposition 1, the held response satisfies the certified network and device constraints, and
\begin{equation}
 \begin{aligned}
 |u(T_{\rm d})-z^{\rm reg}(T_{\rm d})-u^{\rm tar}|&\le\varepsilon,\\
 |u(T_{\rm d})-u^{\rm tar}|&\le\varepsilon+a^{\rm reg}
 \end{aligned} \label{eq:attainment}
\end{equation}
\end{corollary}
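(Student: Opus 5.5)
The argument has two independent parts: the safety claim and the terminal bound. For safety, observe that $r=u^{\rm tar}$ together with some $s\ge|u^{\rm tar}-\hat u|$ forms a witnessing pair $(\hat u,r,s)\in\mathcal P$. Proposition 1 then applies directly. Throughout the hold on $[0,T_{\rm d}]$, the response stays in $\Cac$, satisfies the input and continuous-rate bounds of \eqref{eq:actuator}, and corresponds to an AC solution that meets the original operating limits. All of this holds for the full disturbance allowance $a=a^{\rm reg}+a^{\rm mod}$, so regulation activity is already covered. Condition \eqref{eq:interramp} is not needed for this part; it only makes the ramp-screening statement explicit.

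For the first bound in \eqref{eq:attainment}, I would set $w=u-z^{\rm reg}$ and work componentwise. Since $\dot z_i^{\rm reg}=(\delta_i^{\rm reg}-z_i^{\rm reg})/\tau_i$, subtracting from \eqref{eq:device} gives
\[
\dot w_i=\frac{r_i+\delta_i^{\rm mod}-w_i}{\tau_i(t)},\qquad w(0)=u(0).
\]
This is system \eqref{eq:device} with disturbance bound $a^{\rm mod}$ and the same time-constant bounds. Applying the variation-of-constants formula \eqref{eq:variation} from the proof of Lemma 1 at $t=T_{\rm d}$ gives
\[
w_i(T_{\rm d})-r_i=E_i(T_{\rm d})\,\bigl(u_i(0)-r_i\bigr)+\int_0^{T_{\rm d}}(\cdots)\,\delta_i^{\rm mod}\,ds .
\]
The next step is to bound each term. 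We have $|u_i(0)-r_i|\le|\hat u_i-r_i|+\eta_i\le s_i+\eta_i$. Because $1/\tau_i\ge1/\overline\tau_i$, we get $E_i(T_{\rm d})\le e^{-T_{\rm d}/\overline\tau_i}=e_i^+$. The integral term is at most $(1-E_i(T_{\rm d}))a_i^{\rm mod}$, and $E_i(T_{\rm d})\ge e^{-T_{\rm d}/\underline\tau_i}=e_i^-$ gives $1-E_i(T_{\rm d})\le 1-e_i^-$. Combining these yields exactly the left side of \eqref{eq:terminal}, which is at most $\varepsilon_i$.

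The second bound follows from the triangle inequality once we show $|z_i^{\rm reg}(T_{\rm d})|\le a_i^{\rm reg}$. Apply \eqref{eq:variation} to $z_i^{\rm reg}$ with zero initial state and zero reference. This gives $|z_i^{\rm reg}(t)|\le(1-E_i(t))a_i^{\rm reg}\le a_i^{\rm reg}$, because the kernel is nonnegative with integral $1-E_i(t)$.

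I expect the main care point to be the justification of $w$ as a response of \eqref{eq:device}: $z^{\rm reg}$ must share the same time-varying $\tau_i(t)$ and be driven only by the regulation part of $\delta$, which holds by its definition. I also need the monotonicity of $E_i$ in the rate bounds to run in the directions used above, matching the definitions of $e_i^\pm$.
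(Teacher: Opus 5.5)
Your proposal is correct and follows essentially the same route as the paper. You apply the variation-of-constants formula to $u-z^{\rm reg}-u^{\rm tar}$, bound the initial term by $e^+\odot(s+\eta)$ and the modeling-error integral by $(1-e^-)\odot a^{\rm mod}$, use the nonnegative kernel to get $|z^{\rm reg}|\le a^{\rm reg}$, and invoke Proposition 1 for held-response safety; you simply make explicit the details the paper leaves implicit, namely the reduced ODE for $u-z^{\rm reg}$ and the monotonicity directions behind $e^\pm$.
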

\begin{IEEEproof}
Apply \eqref{eq:variation} to $u-z^{\rm reg}-u^{\rm tar}$. The initial deviation is bounded by $s+\eta$, its transition coefficient by $e^+$, and the modeling-error integral by $(1-e^-)a^{\rm mod}$. Equation \eqref{eq:terminal} yields the first inequality. Positivity gives $|z^{\rm reg}|\le a^{\rm reg}$, yielding the second. Proposition 1 supplies the held-response safety guarantee.
\end{IEEEproof}

\subsubsection{Convex command governor}
Denote by $\mathcal T(\hat u_k,T_{\rm d})$ the targets admitting an $s$ that satisfies these conditions, together with the steady-reference guard $Au^{\rm tar}+|A|(a+\eta)\le b$. All inequalities are affine for a given interval, so this target set is a convex polyhedron. Following the command-governor formulation \cite{garone}, select
\begin{equation}
 \begin{aligned}
 \min_{u^{\rm tar},s}\quad &\tfrac12\norm{u^{\rm tar}-\hat u^{\rm dis}}_W^2
                       +\tfrac{\epsilon_s}{2}\norm{s}_W^2\\
 \mathrm{s.t.}\quad &(\hat u_k,u^{\rm tar},s)\in\mathcal P,\\
 &\eqref{eq:interramp},\ \eqref{eq:terminal},\quad
 Au^{\rm tar}+|A|(a+\eta)\le b
 \end{aligned} \label{eq:qp}
\end{equation}
where $\hat u^{\rm dis}=\hat p^{\rm dis}-p^0$, $W=\diag(d_i^{-2})$, and $\epsilon_s=10^{-8}$ regularizes the auxiliary variables. The objective minimizes deviation within the fixed admissible target set. Safety and terminal accuracy are hard constraints. The formulation has $2n_c$ decision variables, independently of the number of held-response cells or AC network states.

\begin{algorithm}[t]
\caption{Dispatch-time command governor with a fixed AC certificate}
\begin{algorithmic}[1]
\STATE Offline: compute the fixed-point and operating-limit bounds.
\STATE Construct $A,b$, device bounds, and held-response cells.
\STATE At dispatch update $T_k$: validate the operating mode.
\STATE Obtain $\hat u_k$ from the actual current power state.
\STATE Form the headroom, ramp, and terminal conditions for $T_{\rm d}$.
\STATE Solve \eqref{eq:qp} for the accepted target and reference.
\STATE Check primal residuals, the hold envelope, and terminal bound.
\IF{the reference passes these checks}
\STATE Apply and store $r=u^{\rm tar}$ and its holding certificate.
\ELSE
\STATE Retain the stored reference while its bounds remain valid.
\ENDIF
\STATE Track the accepted target with reserved fast regulation.
\end{algorithmic}
\end{algorithm}

Network factorizations, quadratic-form bounds, and $A,b$ are reused offline objects. Online computation updates the measured state, requested target, and interval-dependent inequalities. Topology or operating-limit changes trigger certificate revalidation. A stored reference retains its all-time holding certificate under the stated bounds and supplies an admissible hold between accepted updates. Algorithm 1 makes the inter-interval decision at $T_k$; internal control steps retain the allocated regulation band and the accepted scheduled reference.

\section{Case Studies}
\subsection{Simulation Setup}
The tests use the PGLib-OPF v23.07 versions of the IEEE 39- and 118-bus systems \cite{pglib}. The 35 zero-active-capacity synchronous condensers in the 118-bus model retain their reactive-power role. The independently controlled active-power devices number 9 and 18, respectively. Original voltage, generator, branch apparent-power, and angle limits are retained. AC-feasible base points are computed with IPOPT \cite{ipopt}. The fixed certificates at 90\% and 100\% load characterize the admissible regions; the dispatch-execution experiments use the nominal-load certificates throughout.

\begin{table}[t]
\caption{Test Systems and Simulation Parameters}\label{tab:setup}
\centering\footnotesize
\begin{tabular}{lrr}
\toprule
Parameter & 39-bus & 118-bus\\\midrule
Buses / branches & 39 / 46 & 118 / 186\\
Independent active-power controls $n_c$ & 9 & 18\\
Independent load errors & 21 & 99\\
Balancing-generator $P^{\max}$ (MW) & 646 & 1182\\
Nominal certificate-domain half-width & 1.5\% & 2.0\%\\
Nominal device time constants (s) & \multicolumn{2}{c}{30--90}\\
Time-constant bounds & \multicolumn{2}{c}{$[0.8,1.2]$ times nominal}\\
Device ramp limit & \multicolumn{2}{c}{3\% $P^{\max}$/min}\\
Load-error bound $\bar w$ & \multicolumn{2}{c}{$\pm0.5\%$ per loaded bus}\\
Measurement-error bound $\eta$ & \multicolumn{2}{c}{$\pm0.025\%\ P^{\max}$}\\
Regulation / model-error allowance & \multicolumn{2}{c}{$\pm0.1\%$ / $\pm0.02\%\ P^{\max}$}\\
Dispatch / fast-regulation period (s) & \multicolumn{2}{c}{300 or 900 / 4}\\
Terminal tolerance $\varepsilon$ & \multicolumn{2}{c}{0.1\% $P^{\max}$}\\
Simulation output interval (s) & \multicolumn{2}{c}{0.5}\\
Balancing-generator time constant (s) & \multicolumn{2}{c}{30}\\\bottomrule
\end{tabular}
\par\smallskip\raggedright Network data and operating limits are from the benchmark. Device response, ramp rates, and error bounds are declared simulation parameters and remain unchanged across paired methods.
\end{table}

Table~\ref{tab:setup} gives the model settings. Load profiles combine common and bus-specific smooth components within the certified uncertainty bound. Regulation requests equal actual minus forecast demand, and fixed participation factors allocate each request within $a^{\rm reg}$. This allocation separates the forecast component of the dispatch target from its realized regulation requirement. The power-response simulation uses the exact scalar exponential update on each 0.5-s step. An exact rate-limited update applies when a comparison method reaches a device limit; these events are recorded separately.

The request bank contains 48 conventional fixed-commitment, N-0 network-constrained DC dispatch solutions. Four forecasts per system comprise uniform shifts of $-0.3\%$, zero, and $+0.3\%$, and a fixed spatial perturbation. Each is solved with affine losses, frozen losses, and penalized network-constraint slack, under two movement budgets. The small budget is $\pm1\%\ P^{\max}$; the large budget is the five-minute rate allowance, up to $\pm15\%\ P^{\max}$. Both dispatch durations use this same request bank, isolating the effect of available execution time. Requests are generated independently of the AC certificate. Comparisons are paired by forecast and model variant, with 12 requests per system and movement scale.

The proposed command governor is compared with \emph{static projection}, which applies the fixed-AC-region correction directly, and a \emph{CBF--CLF baseline}, which selects references every 60 s using the same hard held-response conditions and a soft quadratic control Lyapunov function. The baseline weights and 60-s CLF scaling are fixed. The governor instead selects an accepted target with the hard terminal condition and holds its reference for the dispatch interval. All methods retain the nonzero 4-s regulation channel, the same physical limits, and matched disturbances. Each method's correction and resulting terminal error are both reported.

For continuous operation, four successive requests alternate small and large upward and downward movements. Each accepted target is computed from the actual preceding endpoint, preserving continuous physical power states. Two time-varying response realizations and the constant lower and upper time-constant bounds give four sequences per system and dispatch duration. A separate paired ablation uses all 24 large requests with $\tau=\overline\tau$, enabling or disabling only \eqref{eq:terminal}.

The balancing generator is checked from the AC-required trajectory. Its derivative follows from differentiating the original AC equations, and its required first-order input is
\begin{equation}
 r_b^{\rm req}(t)=P_b^{\rm req}(t)+\tau_b\dot P_b^{\rm req}(t) \label{eq:balance}
\end{equation}
Power, input, and ramp bounds are tested independently of the controlled-device certificate. This evaluates the ability of a finite balancing actuator to realize the required AC trajectory. Regulation-adjusted terminal error is the maximum of $|u_i-z_i^{\rm reg}-u_i^{\rm tar}|/P_i^{\max}$; actual terminal deviation is also evaluated against \eqref{eq:attainment}. An independent implementation reconstructs the network and replays all device responses from the saved inputs.

\subsection{Geometry and Admissible Operating Range}
\begin{table}[t]
\caption{Centered Inner Boxes and Offline Computation}\label{tab:regions}
\centering\footnotesize
\begin{tabular}{llrrrr}
\toprule
System & Load & $\Cac$ & $\Cex$ subset & Faces & Offline (s)\\
 & & (\% $P^{\max}$) & (\% $P^{\max}$) & &\\\midrule
39-bus & 90\% & 1.1750 & 1.0550 & 25 & 0.280\\
39-bus & 100\% & 0.5732 & 0.5328 & 24 & 0.244\\
118-bus & 90\% & 0.4441 & 0.1902 & 38 & 4.031\\
118-bus & 100\% & 0.3220 & 0.0600 & 38 & 4.045\\\bottomrule
\end{tabular}
\par\smallskip\raggedright Entries are box half-widths. The $\Cex$ subset uses the witness $r=0$; general membership permits other references. Offline time includes local model preparation and construction of the selected certificate.
\end{table}

Table~\ref{tab:regions} establishes nonempty state neighborhoods at both load levels. At nominal load, the centered admissible-state half-widths are 0.5328\% and 0.0600\% $P^{\max}$. Device-response conditions reduce the static box because the initial measurement error and the full held response must fit within the network margins. This calculation supplies a simple guaranteed subset; state-specific reference selection uses the full lifted representation.

\begin{figure*}[t]
\centering
\subfloat[39-bus system.]{\includegraphics[width=.485\textwidth]{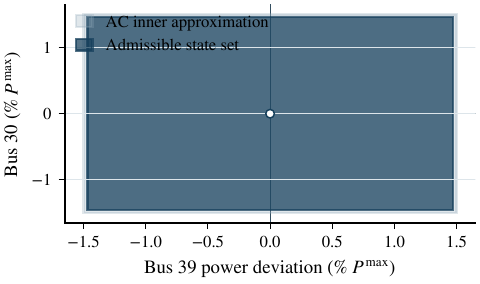}}\hfill
\subfloat[118-bus system.]{\includegraphics[width=.485\textwidth]{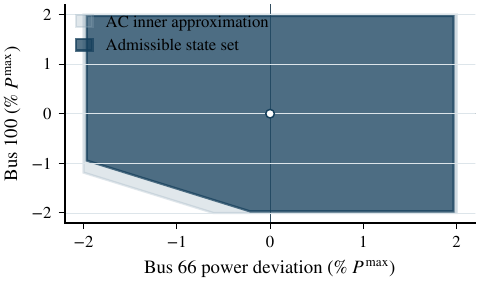}}
\caption{Nominal-load sections of the AC inner approximation and admissible state set. All other measured deviations are zero. Each polygon is the convex hull of feasible points obtained in 180 support directions; each admissible-state point has a feasible reference. The axes correspond to the two largest independently controlled generators. The section and the full-dimensional centered box quantify different aspects of the same lifted representation.}
\label{fig:geometry}
\end{figure*}

The sections in Fig.~\ref{fig:geometry} reveal the directional dependence of admissibility. A reference is optimized at each support point, retaining network coupling while allowing different devices to contribute to feasibility. The larger directional extents compared with the centered boxes demonstrate the value of choosing a reference for the measured state. Nonzero-reference tests further establish usable control freedom: 131 of 156 nonzero aggregate requests admit references when current nominal balancing-input and ramp conditions are included. The fast-regulation allowances total $\pm6.721$ MW and $\pm5.333$ MW in the two systems.

The deadline-dependent target set adds a second, physically different question: which next target can be reached from that state? In \eqref{eq:terminal}, a longer interval reduces the contribution of the initial target distance, whereas in \eqref{eq:actuator} a higher rate limit permits a larger initial response. The support calculations in Section~IV-D quantify both effects with the AC certificate held fixed.

\subsection{Execution of Solver-Generated Dispatch Targets}
\begin{table*}[t]
\caption{Paired Single-Target Dispatch Execution}\label{tab:execution}
\centering\footnotesize
\setlength{\tabcolsep}{7pt}
\begin{tabular}{llrrrrrrr}
\toprule
System & Method & Interval & AC limits & Balancing & No extra & Terminal & Maximum & Mean target\\
 & & (min) & satisfied & capability & limiting & accuracy & error (\%) & correction (MW)\\\midrule
39-bus & Static projection & 5 & 24/24 & 9/24 & 21/24 & 24/24 & 0.0516 & 127.56\\
 & CBF--CLF baseline & 5 & 24/24 & 24/24 & 24/24 & 9/24 & 0.3570 & 127.56\\
 & Command governor & 5 & 24/24 & 24/24 & 24/24 & 24/24 & 0.0474 & 135.71\\
 & Command governor & 15 & 24/24 & 24/24 & 24/24 & 24/24 & 0.0068 & 135.51\\\midrule
118-bus & Static projection & 5 & 24/24 & 24/24 & 12/24 & 24/24 & 0.0702 & 99.70\\
 & CBF--CLF baseline & 5 & 24/24 & 24/24 & 24/24 & 11/24 & 0.8210 & 99.70\\
 & Command governor & 5 & 24/24 & 24/24 & 24/24 & 24/24 & 0.0475 & 103.68\\
 & Command governor & 15 & 24/24 & 24/24 & 24/24 & 24/24 & 0.0082 & 103.29\\\bottomrule
\end{tabular}
\par\smallskip\raggedright Each row includes 12 small and 12 large requests. Terminal accuracy is measured relative to the method's accepted target after accounting for intentional regulation, with a 0.1\% $P^{\max}$ tolerance. Correction is the requested-to-accepted $\ell_2$ distance. Extra limiting records additional actuator saturation or rate limiting. Balancing capability independently checks \eqref{eq:balance}. Every regulation request in these comparisons is fulfilled.
\end{table*}

The command governor converts feasible held responses into timely dispatch transitions. In Table~\ref{tab:execution}, all 96 accepted single-target executions, covering both systems and dispatch durations, satisfy the AC limits, finite balancing capability, and terminal accuracy. No additional actuator limiting is activated. At five minutes, the largest regulation-adjusted errors are 0.0474\% and 0.0475\% $P^{\max}$; at fifteen minutes they fall to 0.0068\% and 0.0082\%. Thus the same network certificate supports both dispatch timescales through the interval-dependent target constraints.

The paired comparisons identify the roles of the different conditions. Static projection prioritizes target tracking through a static network-feasibility constraint. The CBF--CLF baseline preserves the held-response limits while treating convergence as a soft preference. The command governor imposes the deadline during target selection, achieving all three objectives together: AC feasibility, bounded device response, and terminal accuracy. The higher-dimensional target decision also allows the necessary correction to be distributed across devices instead of imposing a common progress factor.

\begin{figure*}[t]
\centering
\subfloat[39-bus: slowest controlled device, bus 39.]{\includegraphics[width=.485\textwidth]{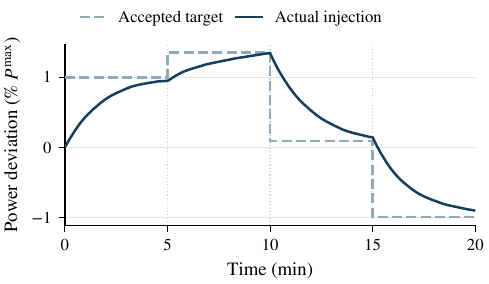}}\hfill
\subfloat[118-bus: slowest controlled device, bus 111.]{\includegraphics[width=.485\textwidth]{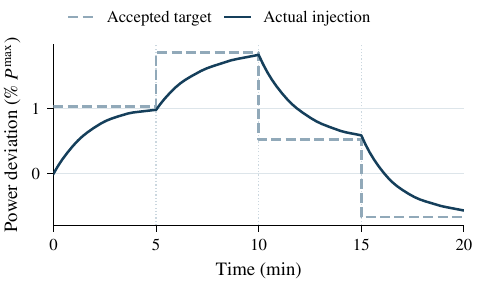}}\\
\subfloat[39-bus balancing-generator ramp requirement.]{\includegraphics[width=.485\textwidth]{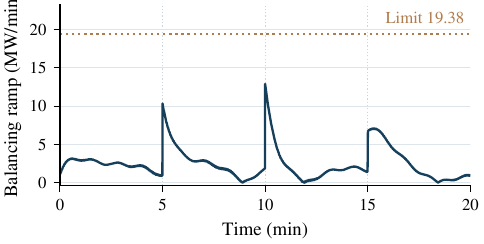}}\hfill
\subfloat[118-bus balancing-generator ramp requirement.]{\includegraphics[width=.485\textwidth]{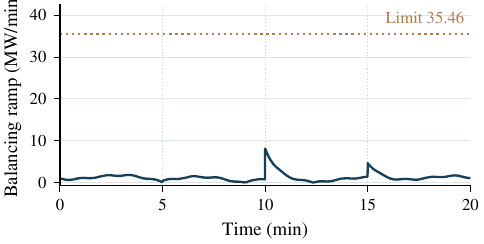}}
\caption{Continuous four-target execution with five-minute dispatch updates and all device time constants at their upper bounds. The preceding physical endpoint becomes the initial state of the next interval. Upper panels show accepted scheduled targets and actual injections; fast regulation remains active every 4 s. Lower panels show the magnitude of the AC-required balancing derivative and its unchanged physical limit.}
\label{fig:trajectories}
\end{figure*}

Fig.~\ref{fig:trajectories} illustrates continuous operation with target reversals and slow devices. Actual power remains continuous when the target changes, and the new governor calculation starts from that actual power. Across the 16 continuous sequences, all 64 transitions satisfy the AC constraints, balancing capability, and hard terminal accuracy. The largest regulation-adjusted endpoint error is 0.0819\% $P^{\max}$. The largest actual endpoint deviation is 0.1078\%, within the 0.2\% band accounting for both tracking tolerance and reserved regulation. These separate measurements preserve the physical contribution of the fast channel.

The accepted transitions produce nonzero changes in actual generation. Across the 160 primary governor transitions, the actual controlled-power movement has a mean $\ell_2$ magnitude of 19.96 MW and ranges from 5.57 to 47.47 MW. The largest balancing ramp in all primary governor trials is 18.70 MW/min for the 39-bus system and 22.70 MW/min for the 118-bus system, below the corresponding 19.38 and 35.46 MW/min limits.

\begin{figure*}[t]
\centering
\subfloat[Requested-to-accepted target correction.]{\includegraphics[width=.485\textwidth]{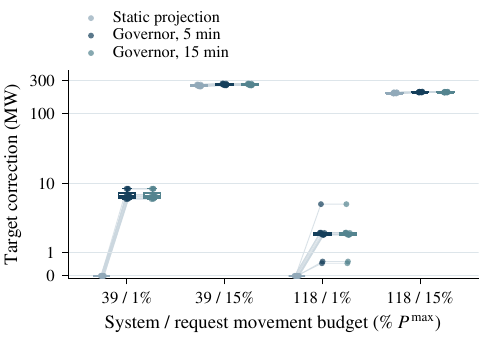}}\hfill
\subfloat[Regulation-adjusted terminal tracking.]{\includegraphics[width=.485\textwidth]{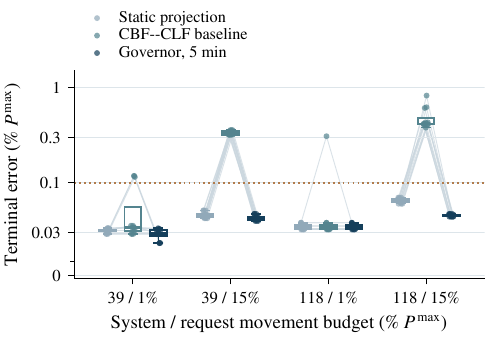}}
\caption{Paired single-target distributions, with all 12 cases retained in each system and request-scale group. Boxes show medians and interquartile ranges; scatter points show individual observations and thin connectors link matching requests. Corrections and tracking errors are reported together because each method can accept a different target. The dotted line in (b) is the hard terminal tolerance. Symmetric logarithmic scales retain zero corrections and small errors.}
\label{fig:distributions}
\end{figure*}

Fig.~\ref{fig:distributions} shows how the governor allocates the additional correction needed for execution. At five minutes, its mean correction exceeds static projection by 8.15 MW in the 39-bus system and 3.98 MW in the 118-bus system, while the accepted targets all meet the deadline. The target movement scale is kept visible: most correction in the large-request group comes from the fixed AC region, and the added device and terminal conditions refine the executable endpoint. Increasing the interval to fifteen minutes reduces the governor correction with the same network region and physical limits.

\subsection{Effects of Dispatch Interval and Response Uncertainty}
\begin{figure*}[t]
\centering
\subfloat[Terminal-condition ablation.]{\includegraphics[width=.325\textwidth]{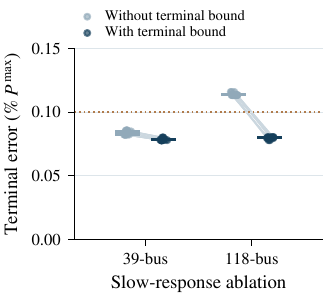}}\hfill
\subfloat[Paired continuous sequences.]{\includegraphics[width=.325\textwidth]{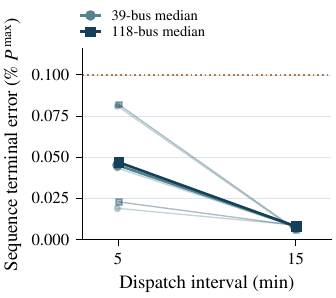}}\hfill
\subfloat[Admissible upward target range.]{\includegraphics[width=.325\textwidth]{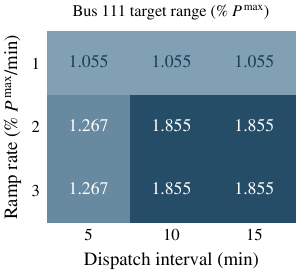}}
\caption{Dispatch-time capability and terminal accuracy. (a) All large requests are replayed with $\tau=\overline\tau$, changing only the terminal constraint. (b) Each thin line connects one of four matched sequences per system; the plotted error is the maximum across its four dispatch boundaries. (c) Linear programs maximize the bus-111 target in $\mathcal T(0,T_{\rm d})$ on the 118-bus system, allowing other reference components to vary. Entries are support values in percent of $P^{\max}$, with the AC certificate, error bounds, and terminal tolerance unchanged.}
\label{fig:sensitivity}
\end{figure*}

The ablation in Fig.~\ref{fig:sensitivity}(a) isolates the contribution of the hard terminal bound. Both variants use the same convex response constraints, request, disturbance, and slow time constants. On the 118-bus system, enforcing \eqref{eq:terminal} increases the terminal-accuracy count from 0/12 to 12/12 and reduces the maximum error from 0.1160\% to 0.0805\% $P^{\max}$. Both variants satisfy the terminal tolerance for the 39-bus cases. All 24 executions with the terminal bound preserve the network and finite-device checks. The result identifies the terminal-response condition as the mechanism enforcing the dispatch deadline.

The continuous-sequence comparison in Fig.~\ref{fig:sensitivity}(b) includes two time-varying realizations and both constant response bounds. All five- and fifteen-minute sequences achieve their accepted targets. Increasing $T_{\rm d}$ reduces the maximum endpoint error from 0.0811\% to 0.0087\% for the 39-bus system and from 0.0819\% to 0.0084\% for the 118-bus system. Fast regulation remains active at the same four-second period in both cases. The improvement therefore reflects the scheduled response time, with internal variability treated through the unchanged regulation allowance.

Fig.~\ref{fig:sensitivity}(c) directly quantifies available ramping flexibility. At 1\% $P^{\max}$/min, the admissible upward bus-111 target is 1.055\% $P^{\max}$ for all three tested intervals. Raising the ramp rate to 2\% increases the five-minute value to 1.267\%, while ten and fifteen minutes permit 1.855\%. Further increasing the rate to 3\% leaves these support values unchanged. This saturation has a clear interpretation: once the local rate requirement is satisfied, the terminal response and fixed network margins determine the remaining range. The 39-bus support calculations show the same interaction, with the corresponding long-interval value of 1.355\% at bus 39.

These results distinguish dispatch-time ramping capability from continuous reserve rescheduling. The governor evaluates the new movement once at the boundary using the measured state and the available interval. The response certificate then covers the intervening trajectory for the full allocated regulation band. At the next dispatch update, the calculation repeats from the actual endpoint, incorporating any previous tracking and regulation contribution into the new initial state.

\subsection{Computational Performance}
Computation uses one thread on an AMD EPYC 9V74 processor, with CasADi \cite{casadi}, DAQP \cite{daqp}, ProxQP \cite{proxqp}, HiGHS \cite{highs}, and OSQP \cite{osqp}. Every QP result undergoes independent primal and response-envelope checks. The 39- and 118-bus governor QPs have 18 and 36 variables and 666 and 1062 linear rows, respectively. The common holding-time partition is 0:4:60, 80, 100, 130, 180, 240, 360, 600, 900 s, and infinity.

\begin{figure}[t]
\centering\includegraphics[width=.98\columnwidth]{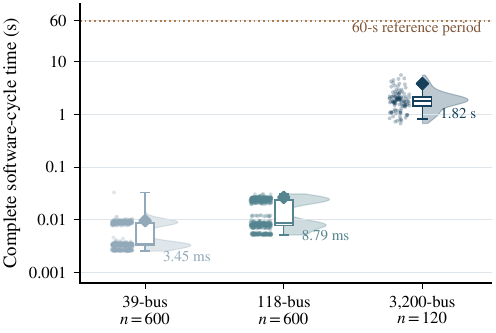}
\caption{Complete software-cycle times on a logarithmic axis: 600 serial calls for each IEEE system and 120 for the synthetic 3,200-bus system. Half violins show the distributions, boxes the interquartile ranges, diamonds the 95th percentiles, and points individual calls. All solver attempts are included.}
\label{fig:timing}
\end{figure}

Fig.~\ref{fig:timing} includes input preparation, constraint assembly, QP solution, reference validation, regulation allocation, AC power flow, and balancing-capability evaluation. The 39-/118-bus medians are 3.45/8.79 ms, 95th percentiles 9.60/26.57 ms, and maxima 33.08/30.84 ms; all 1,200 calls satisfy the tested conditions.

\subsubsection{3,200-bus computational benchmark}
A reproducible synthetic AC system contains 3,200 buses, 1,000 generators at distinct buses, and 3,600 branches. Its connected topology combines a 1,000-bus generator ring, 400 additional interconnections, and 2,200 loaded buses. Total demand is 51.62 GW. The full AC model has 6,398 rectangular voltage variables; the online problem retains all 999 independent generator references plus 999 auxiliaries.

The fixed certificate uses a $0.05\%\,P^{\max}$ injection-domain half-width and 21 bounded load factors, consisting of one common and 20 regional components. Each bus combines its common and regional errors with equal weights, giving a $0.0025\%$ load-error bound. Measurement, regulation, and model-error bounds are $0.0005\%$, $0.0025\%$, and $0.00025\%\,P^{\max}$, respectively. Device response and physical rate limits retain Table~\ref{tab:setup}'s settings. The self-mapping certificate has contraction bound 0.9374. Sparse coefficient preparation and operating-limit construction take 227.55 and 9.90 s, respectively, and the resulting certificate is reused throughout the online tests.

Box-support tests and certified halfspace dominance reduce 247,998 operating/domain inequalities to 2,000 retained rows. Exact scalar elimination of device-wise absolute-value conditions yields a 1,998-variable QP with 2,050 linear rows and 105,894 nonzeros. References are checked against all 2,325 pre-dominance rows. Four seeded request patterns exercise different correction magnitudes in 120 serial calls, including two full AC power-flow checks per call. Median, 95th-percentile, and maximum complete times are \LargeMedian{}, \LargePninetyfive{}, and \LargeMax{} s. Peak online memory is \LargeMemory{} MiB. The observed maximum occupies \LargePercent{}\% of a 60-s reference-update period. \LargeTrajectorySentence{} These measurements establish a practical online computation budget for the fixed-certificate implementation at the tested network and control dimensions.

\section{Conclusion}
This paper constructs convex inner approximations linking AC network constraints, device response, and dispatch-time ramping capability. Fixed-point certification and response envelopes provide affine admissibility conditions. A command governor combines these conditions with headroom, rate limits, and a hard terminal-error bound to select the next target from the actual preceding endpoint. The IEEE 39- and 118-bus experiments complete all 160 primary accepted transitions within five or fifteen minutes while satisfying AC and finite balancing-capability checks. A connected 3,200-bus synthetic system extends the computational assessment to 999 independent references, with a \LargeMedian{}-s median and \LargeMax{}-s maximum complete software cycle. These results support repeated dispatch admission using reusable network certificates and explicit device-response conditions.

\balance
\bibliographystyle{IEEEtran}
\bibliography{references}
\end{document}